\newcommand{\R}{{\mathord{\mathbb R}}}
\newcommand{\Z}{{\mathord{\mathbb Z}}}
\newcommand{\N}{{\mathord{\mathbb N}}}
\newcommand{\C}{{\mathord{\mathbb C}}}
\def\chib {\overline{\chi}}
\newcommand{\HH}{\mathcal{H}}
\newcommand{\FF}{\mathcal{F}}
\newcommand{\VV}{\mathcal{V}}
\newcommand{\WW}{\mathcal{W}}
\newcommand{\hh}{\mathfrak{h}}
\newcommand{\vv}{\mathfrak{v}}
\newcommand{\UU}{\mathcal{U}}
\newcommand{\ran}{{\rm Ran}}
\newcommand{\ben}{\begin{displaymath}}
\newcommand{\een}{\end{displaymath}}
\newcommand{\beqn}{\begin{equation}}
\newcommand{\eeqn}{\end{equation}}
\newcommand{\beqna}{\begin{eqnarray*}}
\newcommand{\eeqna}{\end{eqnarray*}}
\newcommand{\be}{\begin{equation}}
\newcommand{\ee}{\end{equation}}
\def\inf{{\rm inf}\,}
\newtheorem{theorem}{Theorem}
\newtheorem{corollary}[theorem]{Corollary}
\newtheorem{lemma}[theorem]{Lemma}
\newtheorem{proposition}[theorem]{Proposition}
\newtheorem{remark}{Remark}
\newtheorem{definition}{Definition}
\numberwithin{equation}{section}
\numberwithin{theorem}{section}
\numberwithin{definition}{section}
\begin{document}

\title{A vanishing theorem for operators in Fock space}
\author{\vspace{5pt} D. Hasler $^1$\footnote{
E-mail: david.hasler@math.lmu.de, on leave from Ludwig Maximilians University} and I.
Herbst$^2$\footnote{E-mail: iwh@virginia.edu.} \\
\vspace{-4pt} \small{$1.$ 
Institute of Mathematics, University of Munich, } \\
\small{  D-80333 Munich, Germany     }\\
\vspace{-4pt}
\small{$2.$ Department of Mathematics, University of Virginia,} \\
\small{Charlottesville, VA, 
 USA}\\}
\date{}
\maketitle

\begin{abstract}
We consider the bosonic Fock space over the Hilbert space of transversal vector fields in three dimensions.
This space carries a canonical representation of the group of rotations.
For a certain class of operators in Fock space we show that rotation invariance implies
the absence of terms which either create or annihilate only a single particle.
We outline an application of this result in an operator theoretic
renormalization analysis of Hamilton operators, which occur  in non-relativistic qed.
\end{abstract}

{\bf AMS Subject Classification:} 81T16, 81T10
\medskip

{\it Key words:} non-relativistic quantum electrodynamics, Fock space, operator 
theoretic renormalization, symmetry

\section{Introduction}

In this paper  we consider the bosonic Fock space over the Hilbert space of transversal
vector fields in three dimensions. This space is used in mathematical  models of   quantized radiation and  carries a canonical representation of the group of rotations.
We consider a class of   operators in this Fock space  which   arise in a so called
operator theoretic renormalization
analysis of non-relativistic quantum electrodynamics (qed) \cite{BFS98,BCFS03}.
For operators in this class we prove  that  rotation invariance implies
the absence of terms which either create or annihilate only a single particle.
This vanishing theorem implies that   under  a non-degeneracy assumption
marginal terms in operator theoretic renormalization  are  absent.
In \cite{HH10-2,HH10-3} this property  was used to obtain
ground state  properties in  non-relativistic qed, such as analyticity in a
minimal coupling constant or differentiability in the fine structure constant.

The vanishing theorem  was first shown
in  \cite{HH10-2}, where  the  proof relied on a  uniqueness result from \cite{BCFS03} (Theorem 3.3).
The proof  in the present paper is self-contained.
Since vanishing theorems have far reaching consequences in the
context of operator theoretic renormalization, we consider the
short and new proof presented in this paper of value to the scientific literature.

In Section~\ref{sec:modsta} we introduce the Fock space and define a canonical representation
of the group of rotations in three dimensions.
In  Theorem~\ref{thm:main} the vanishing theorem is stated, the main result of this paper.
In Subsection~\ref{sec:app} we introduce the  Hamiltonian of non-relativistic qed
and show that it is rotation invariant, with respect to the representation defined
in Section~\ref{sec:modsta}.
In Subsection~\ref{sec:smo},   the vanishing theorem is used to derive a corollary, which
outlines applications to operator theoretic renormalization. 

\section{Model and Statement of Main Result}
\label{sec:modsta}

We consider  the special group of orthogonal matrices  in three dimensions
$$
G := \{ R \in M_3(\R) \  | \ {\rm det} R = 1 , \, R^T = R^{-1} \} ,
$$
where $M_3(\R)$ denotes the set of $3 \times 3$ matrices over the real
numbers, with the usual topology. As a subset of  $M_3(\R)$ the group $G$ inherits
a natural topology.

\begin{definition}
A representation of $G$ ($G$--representation) is a strongly continuous map $\UU : G \to \mathcal{B}(\HH)$ to the unitary
operators on $\HH$, such that
$$
\UU(R_1 R_2) = \UU(R_1) \UU(R_2) ,  \quad \forall R_1, R_2  \in G .
$$
\end{definition}

We shall adopt the following standard conventions. Given a representation $\UU$ of
$G$ on $\HH$,
a vector $v \in \HH$  is called $\UU$--invariant if $\UU(R) v = v$  for all $R \in G$.
A subspace $V \subset \HH$ is called $\UU$--invariant if $\UU(G) V \subset V$.
An operator  $T$ in $\HH$ with domain $D$ is  called $\UU$--invariant if
$\UU(R) D = D$ and   $\UU(R) T \UU(R)^* = T$ for all $R \in G$.
We will write $G$--invariant instead of $\UU$ invariant, if it is clear from the context what the representation is. By  rotation invariant we shall always mean $G$--invariant.

The Hilbert space of vector fields  $\VV=L^2( \R^3 ; \C^3)$
carries a natural  representation of $G$, which will
   be  denoted by   $\mathcal{U}_{\VV}$. Explicitly on vector fields  $v \in \VV$  it acts as
\begin{equation} \label{eq:defofvec}
 \left[ \,  \mathcal{U}_\VV(R) v \, \right] (\cdot ) = R  \left[ v (R^{-1} \, \cdot \,  ) \right] ,
\end{equation}
for   all $R \in G$.
We define the   subspace
$$
\vv := \{ \ v \in \VV \ | \ k \cdot v(k)  = 0 \ {\rm a.e.} \ k \in \R^3 \  \} ,
$$
 of transversal vector fields in $\VV$.
It is straightforward to verify, that
 $\vv$ is invariant under the representation $\UU_\VV$.
The main result will be based on the following  lemma.

\begin{lemma} \label{lem:onepart}
If  $v \in \vv$ is $G$--invariant, then $v=0$.
\end{lemma}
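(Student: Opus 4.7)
The plan is to show that $G$-invariance forces $v(k)$ to lie along the direction of $k$ for almost every $k$, and then to use transversality to conclude $v(k) = 0$ a.e. The main technical obstacle will be passing from the $L^2$-level invariance statement to a pointwise statement about the stabilizer $G_k \cong SO(2)$ of an individual $k$, since $G_k$ has Haar measure zero in $G$.

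First, I would reformulate the condition $\UU_\VV(R) v = v$: for each fixed $R \in G$, the identity $R v(R^{-1}k) = v(k)$ holds for Lebesgue-a.e. $k$. Applying Fubini to the nonnegative function $F(R, k) := |v(k) - R v(R^{-1}k)|^2$ with the product of normalized Haar measure on $G$ and Lebesgue measure on $\R^3$, this upgrades to a joint a.e. statement: for Lebesgue-a.e. $k \in \R^3$, the equation $R v(R^{-1}k) = v(k)$ holds for Haar-a.e. $R \in G$.

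Next, I would fix such a good $k_0 \neq 0$ and disintegrate Haar measure on $G$ over the orbit map $\pi : R \mapsto R^{-1}k_0 \in S^2_{|k_0|}$. The fiber over $k \in S^2_{|k_0|}$ is a left coset $R_* G_k$ of the stabilizer $G_k$. The disintegration gives: for a.e. $k$ on the sphere and for a.e. $R$ in the fiber over $k$, one has $R v(k) = v(k_0)$. Comparing two good elements $R_1, R_2$ in the same fiber, one obtains $S v(k) = v(k)$ for $S := R_1^{-1} R_2 \in G_k$. The set of such $S$ has the form $A^{-1}A$ for some $A \subset G_k$ of full Haar measure, hence contains a neighborhood of the identity by Steinhaus's theorem. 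Since $\{S \in G_k : S v(k) = v(k)\}$ is a closed subgroup of $G_k \cong SO(2)$ containing such a neighborhood, it must coincide with all of $G_k$.

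Finally, the fixed subspace of $G_k$ (rotations about the axis $k$) acting on $\C^3$ via the defining representation is precisely the complex line $\C k$, so $v(k) \in \C k$ for a.e. $k$. Writing $v(k) = \lambda(k) k$ and using transversality yields $0 = k \cdot v(k) = \lambda(k) |k|^2$, which forces $\lambda(k) = 0$ for a.e. $k \neq 0$; hence $v = 0$ in $\vv$. The one delicate step is the Haar-disintegration plus Steinhaus argument above, which bridges the gap between $L^2$-invariance and pointwise stabilizer-invariance.
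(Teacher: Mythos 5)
Your argument is correct, but it takes a genuinely different route from the paper's. The paper works representation-theoretically: it decomposes $\VV \cong L^2(\R_+;r^2dr)\otimes L^2(S_2;\C^3)$, expands $L^2(S_2;\C^3)=\bigoplus_l H_l\otimes\C^3$ on which $G$ acts as $D_l\otimes D_1$, and invokes Clebsch--Gordan to locate the unique copy of the trivial representation (inside $D_1\otimes D_1$), concluding that every $G$-invariant vector field is ``radial'' and hence not transversal. You instead upgrade the $L^2$-level invariance to a pointwise statement about the stabilizer $G_k$ via Fubini, disintegration of Haar measure over the orbit map, and a Steinhaus-type argument, then finish with $v(k)\in\C k$ and transversality. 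Both proofs are really establishing the same structural fact --- that the invariant vectors are the radial ones --- but your approach avoids spherical harmonics and Clebsch--Gordan entirely at the cost of more delicate measure theory (joint measurability of $(R,k)\mapsto Rv(R^{-1}k)$ after choosing a Borel representative, the Weil integration formula giving the disintegration, and a measurable section of the orbit map), all of which you rightly flag as the delicate point but which are indeed standard. Two small remarks: since your set $A\subset G_k$ has \emph{full} Haar measure rather than merely positive measure, $A^{-1}A=G_k$ follows directly (any $S\in G_k$ satisfies $A\cap AS\neq\emptyset$ by measure count), so Steinhaus is not actually needed; and one should note explicitly that the a.e.-$k_0$ statement depends only on $|k_0|$, so by polar coordinates the conclusion $v(k)\in\C k$ does hold for Lebesgue-a.e.\ $k\in\R^3$ as required.
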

We note that under  the simplifying assumption that $v \in \vv$ is
defined everywhere and $(\UU_\vv(R) v)(k) = v(k)$ for all $k \in \R^3$,
the assertion of the Lemma follows trivially from the hairy ball
theorem. To this end,  observe that for such a $v \in \vv$
\be \label{eq:ginv}
  v(  r  R e  ) = R v(   r e  ) ,
\ee
for all $R \in G$, $ r \geq 0$, and $e \in  S_2 := \{ e   \in \R^3 | \ |e   | = 1 \}$.
Eq. \eqref{eq:ginv} implies that for every $r > 0$, the function
$e \mapsto v( r e)$ is a continuous function of   $e \in S^2$ and its value has
constant norm.  On the other hand $v( r e)$ is tangential to $S^2$ and this contradicts the hairy ball theorem unless it  is the zero vector.
The problem with this argument is that elements of
$\vv$ are only defined up to sets of measure zero.
To deal with this issue we  decompose $\vv$ into  the irreducible representations  of $G$.
The irreducible unitary representations of $G$ are denoted  by $D_j$,
$j \in \N_0$, and are  uniquely determined up to
unitary equivalence by their dimension ${\rm dim} \, D_j = 2 j + 1$. For the proof of Lemma \ref{lem:onepart} we use the following  idea from \cite{HH10-2}.
\medskip

\noindent
{\it Proof of Lemma \ref{lem:onepart}}.
By means of the  canonical isomorphism
\begin{equation} \label{lem:onepartttt}
 L^2(\R_+ ; r^2 dr ) \otimes L^2(S_2  ; \C^3)  \cong   \VV
\end{equation}
we can identify
$$
\VV_0 := L^2(\R_+; r^2 dr ) \otimes \mathcal{S} ,
$$
where $\mathcal{S} := \{ f \in L^2(  S_2 ; \C^3) \   | \ f(e) = \lambda e, \,  \lambda \in \C \}$,
with  a $G$--invariant subspace of $\VV$. It follows from \eqref{eq:defofvec}
that each element in $\VV_0$ is $G$--invariant. On the other hand it is an
immediate consequence of the definition of $\mathcal{S}$ that $\VV_0$ does not contain
any nonzero transversal vector fields. The Lemma will follow if we can show that every
$G$--invariant element of $\VV$ lies in $\VV_0$. To this end, let  $H_l$ denote the space of spherical harmonics with angular momentum $l$.  Using
$L^2(S_2 ; \C^3 ) = \bigoplus_{l=0}^\infty H_l \otimes \C^3 $, we find from \eqref{lem:onepartttt}
 an isomorphism of Hilbert spaces
\begin{align}
\VV   
& \cong  L^2(\R_+; r^2 dr ) \otimes  \bigoplus_{l=0}^\infty \left\{  H_l  \otimes \C^3  \right\}  .
\label{eq:rep3}
\end{align}
From  \eqref{eq:defofvec} it follows that  $G$ acts on
$H_l \otimes \C^3$ as $D_l \otimes D_1$. By the Clebsch--Gordon decomposition
we see that this representation contains the trivial representation only
if $l=1$, in which case
$D_1 \otimes D_1 \cong D_2 \oplus D_1 \oplus D_0$. Since  the
trivial representation only occurs
once in $L^2(S_2 ; \C^3)$,  it follows that $\VV_0$ contains
all elements in $\VV$ which are $G$--invariant.
 \qed

\medskip

We choose two Borel-measurable endomorphisms
$$\varepsilon_1, \, \varepsilon_2  : S_2   \to S_2$$
with the  property that for a.e. $e \in S^2$,
$
\varepsilon_1( e )  \times \varepsilon_2( e ) = e  ,
$
where  $\times$ denotes the vector product.
We extend these mappings  to the set $\R^3_\times :=  \R^3 \setminus \{ 0 \}$
by setting
$\varepsilon_\lambda(k)  = \varepsilon_\lambda( k/|k|)$ for
$\lambda=1,2$ and $k \in \R^3_\times$.
The explicit choice of these so called  polarization vectors
establishes a canonical isomorphism,  $\phi$, from the
Hilbert space
$\hh := L^2( \R^3 \times \Z_2 )$, to the Hilbert space of  transversal
vector fields,
i.e.,
\begin{eqnarray}
\phi : \ \hh  &\to& \vv   \nonumber  \\
  h &\mapsto&   \Big\{  k \mapsto   \varepsilon_1(k) h(k,1)
+  \varepsilon_2(k) h(k,2)  \Big\} . \label{eq:defofphi}
\end{eqnarray}
By means of this  isomorphism we obtain the $G$--representation, $\UU_\hh$,  on $\hh$ as follows
\begin{equation} \label{eq:defofUhh}
 \UU_\hh(R)  = \phi^{-1} \   \UU_\vv(R) \ \phi  ,\quad \forall R \in G .
\end{equation}

From Eq. \eqref{eq:defofUhh} the following   Corollary is an immediate consequence of  Lemma  \ref{lem:onepart}.

\begin{corollary} \label{lem:onepart}
If  $h \in \hh$ is $G$--invariant, then $h=0$.
\end{corollary}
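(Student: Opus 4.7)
The plan is straightforward: this corollary is essentially a translation of the preceding lemma across the isomorphism $\phi$, so the proof should amount to transporting $G$-invariance from $\hh$ to $\vv$ and then invoking Lemma~\ref{lem:onepart} (the first one, on transversal vector fields).

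First I would take an arbitrary $G$-invariant $h \in \hh$, meaning $\UU_\hh(R) h = h$ for every $R \in G$. Applying $\phi$ to both sides and using the defining relation $\UU_\hh(R) = \phi^{-1} \UU_\vv(R) \phi$ from \eqref{eq:defofUhh}, I would obtain
\[
\UU_\vv(R) (\phi h) = \phi \UU_\hh(R) h = \phi h \qquad \forall R \in G,
\]
so that $\phi h \in \vv$ is itself $G$-invariant. Applying Lemma~\ref{lem:onepart} to $\phi h$ then forces $\phi h = 0$, and since $\phi$ is a Hilbert space isomorphism (in particular injective) by its construction in \eqref{eq:defofphi}, we conclude $h = 0$.

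There is essentially no obstacle here, since all the substantive work has been carried out in the proof of Lemma~\ref{lem:onepart}. The only point one might wish to verify along the way is the intertwining relation $\phi \UU_\hh(R) = \UU_\vv(R) \phi$, but this is immediate upon rearranging \eqref{eq:defofUhh}. Thus the corollary reduces to a two-line argument and requires no additional ingredients beyond the isomorphism $\phi$ and the already established lemma.
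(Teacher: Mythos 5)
Your argument is exactly the one the paper intends: the paper remarks that the Corollary ``is an immediate consequence'' of Lemma~\ref{lem:onepart} via the definition \eqref{eq:defofUhh}, which is precisely the intertwining-and-injectivity reasoning you spell out. Your proof is correct and matches the paper's approach.
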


Next we introduce the bosonic Fock space over the Hilbert space $\hh$. We define
$$
\FF  := \bigoplus_{n=0}^\infty  \FF_n
$$
with $\FF_0 := \C$ and   $\FF_n :=  S_n ( \hh^{\otimes n} )$ for  $n \in \N$,
where $S_n$ denotes the orthogonal projection onto the subspace of totally symmetric
tensors in $ \hh^{\otimes n}  $.
The vector $\Omega := (1,0,0,...) \in \FF$ is called the Fock vacuum.
The Fock space inherits a natural inner product from the Hilbert space $\hh$.
The creation operator, $a^*(f)$, for $f \in \hh$
 is defined on vectors $\eta \in  \FF_n$, with  $n \in \N_0$,
by
\beqn \label{eq:formala}
a^*(f) \eta := (n+1)^{1/2} S_{n+1} ( f \otimes \eta ) \; .
\eeqn
We extend this  definition  by linearity to  a  densely defined linear operator on $\FF$.
The resulting operator  is  closable and its closure will be denoted by the same symbol.
We introduce the annihilation operator by
\begin{equation}  \label{eq:defofanihi}
a(f) := \big\{ a^*(f) \big\}^*.
\end{equation}

For  a bounded linear operator, $A$, on  $\hh$, we denote by $\Gamma(A)$   the unique bounded linear  operator on $\FF$ satisfying $\Gamma(A) \Omega = \Omega$ and
$$
\Gamma(A)  S_n ( \varphi_1 \otimes \cdots \otimes \varphi_n )  = S_n ( A  \varphi_1 \otimes \cdots \otimes  A \varphi_n  ) ,
$$
for any $\varphi_1,...,\varphi_n \in \hh$ and $n \in \N$.
The Fock space $\FF$
carries a natural representation of $G$ given by
\begin{equation} \label{eq:defofuu}
\UU_\FF := \Gamma(\UU_\hh) .
\end{equation}

Below we introduce operators on Fock space, which arise in operator theoretic renormalization.
First we introduce the operator of the free field energy. We  define   the function
 $\omega : \R^3 \times \Z_2 \to \R  , ( k, \lambda) \mapsto |k|$ and denote
the corresponding multiplication operator on $\hh$ by the same symbol.
  The operator
 of the free field energy, $H_f$,  is defined as the
unique selfadjoint operator on Fock space such that for all $t \in \R$
\begin{equation} \label{eq:defofH}
 e^{- i H_f t } = \Gamma(e^{- i \omega t} )  .
\end{equation}
We introduce the sets
 $$ B_1 := \{ k \in \R^3 | |k| \leq 1 \} , \quad  X :=  B_1  \times \Z_2  , \quad  I:=[0,1] . $$
For $m,n \in \N_0$, we define   the Banach space, $\WW_{m,n}$, with norm $\| \cdot \|_{\WW_{m,n}}$, defined  in \eqref{eq:defofnorm} below, to consist of the measurable functions
$$
 w_{m,n} :  X^{m} \times X^{n} \to C^0(I; \C ) ,
$$
satisfying the symmetry property \eqref{eq:sym} and  the support
property  \eqref{eq:supportprop}, below.
The symmetry property states that for all
$\tilde{K}_1,...,\tilde{K}_m, K_1,..., {K}_n \in X$, one has
\begin{equation} \label{eq:sym}
w_{m,n}((\tilde{K}_1,...,\tilde{K}_m), (K_1,...,K_n)  ) = w_{m,n}( (\tilde{K}_{\tilde{\sigma}(1)},...,\tilde{K}_{\tilde{\sigma} (m)}), (K_{\sigma(1)},...,K_{\sigma(n)}) ) ,
\end{equation}
for any permutation $\sigma$ and  $\tilde{\sigma}$  of $\{1,...,n\}$ and
$\{1,...,m\}$, respectively.
For simplicity of notation we shall  write
$$ w_{m,n}(r ;K^{(m,n)}) =  w_{m,n}(K^{(m,n)})(r),$$
for  $r \in I$  and $K^{(m,n)} \in X^{m} \times X^n$.
Moreover, we  introduce the following notations
\begin{align}   \label{eq:defofK}
& K^{(m,n)} := (\tilde{K}^{(m)},K^{(n)}) :=  ((\tilde{k}_1, \tilde{\lambda}_1,...,\tilde{k}_m, \tilde{\lambda}_m ), ( {k}_1, {\lambda}_1,...,{k}_n, {\lambda}_n)) \in X^m \times X^n  \\
& |K^{(m)}| := \prod_{j=1}^m |k_j| ,
\ \   d K^{(m)} := \prod_{j=1}^m d k_j^3 , \ \ \Sigma[{K}^{(n)}]  := \sum_{j=1}^n | k_j |   \ .\label{eq:defofK2}
\end{align}
The support property states that
\begin{equation}
\label{eq:supportprop} w_{m,n}(r ; K^{(m,n)})  =
1_{ \Sigma[\tilde{K}^{(m)}] + r \leq  1}  w_{m,n}(r ; K^{(m,n)}) 1_{\Sigma[{K}^{(n)}] + r \leq  1} ,
\end{equation}
for all $r \in I$ and $K^{(m,n)} \in X^{m+n}$.
The norm is given by
\be \label{eq:defofnorm}
\| w_{m,n} \|_{\WW_{m,n}} :=  \left\{ \int_{ X^m \times X^n  }
{\rm sup}_{r \in I} | w_{m,n}(r; K^{(m,n)} )  |^2
d \mu_{m,n}^{(2)}(K^{(m,n)})  \right\}^{1/2}  ,
\ee
with measure
$$
d \mu_{m,n}^{(p)}(K^{(m,n)}) := \frac{ d \tilde{K}^{(m)}}{|\tilde{K}^{(m)}|^p} \frac{ d K^{(m)}}{|K^{(n)}|^p}  .
  $$
The integration in \eqref{eq:defofnorm} includes 
summation over $\lambda_i$ and $\tilde{\lambda}_i$.
For  $0 < \xi < 1$,  we define the Banach space $\WW_\xi$
to consist of sequences
$\underline{w} = (w_{m,n})_{(m,n) \in \N_0^2}$ with $w_{m,n} \in \WW_{m,n}$
such that the norm
\be  \label{eq:defofxi}
\| \underline{w}\|_\xi := \sum_{m+n \geq 0} \xi^{-(m+n)} \| w_{m,n} \|_{\WW_{m,n}}
\ee
is finite.
Let  $\underline{w} \in \WW_\xi$.  We define the operator
\begin{eqnarray}
 H_{m,n}[\underline{w}] :=    \label{defofintop}
 \int_{X^m \times X^n  } \prod_{i=1}^m
 a^*_{\tilde{\lambda}_i}(\tilde{k}_i)     w_{m,n}(H_f ; K^{(m,n)} )
\prod_{j=1}^n a_{{\lambda}_j}({k}_j)  d \mu_{m,n}^{(1/2)}(K^{(m,n)} )  .
\end{eqnarray}
 Definition \eqref{defofintop} is understood in the sense of forms  and the right hand side
of  \eqref{defofintop} as a weak integral. Moreover, the definition involves the
operator valued distributions $a_\lambda(k)$ and $a_\lambda^*(k)$ which are
characterized by the relations
$$
a^*(f) =  \sum_{\lambda=1,2}  \int f(k,\lambda) a^*_\lambda(k) d^3 k  , \quad a(f) = \sum_{\lambda=1,2} \int \overline{f(k,\lambda)} a_\lambda(k)  d^3 k ,
$$
for all $f \in \hh$.
In the Appendix we give an explicit and rigorous  definition of
 \eqref{defofintop} which does not
involve any creation or annihilation operators.

\begin{remark} In this paper we
view   $H[\underline{w}]$ as  on operator on Fock space.
To this end we tacitly
extend the kernels $w_{m,n}$ to functions on $\R \times \{ \R^3 \times \Z_2 \}^{m+n}$
by setting them equal to zero on the complement of $I \times X^{m+n}$.
Alternatively, we could view  $H[\underline{w}]$  to be an operator
on $1_{H_f \leq 1} \FF$. This would not change the analysis.
By assumption \eqref{eq:supportprop} we can neglect projection operators $1_{H_f \leq 1}$
in the definition   \eqref{defofintop}, which appear   in
 \cite{BCFS03}.
\end{remark}

With respect to the operator norm in Fock space the following estimate is shown
in the Appendix,
\begin{equation} \label{eq:boundonH}
\| H_{m,n}[\underline{w}] \| \leq   \| w_{m,n} \|_{\WW_{m,n}} .
\end{equation}
We define the operator
\be \label{defofintop0}
H[\underline{w}] := \sum_{m+n \geq 0} H_{m,n}[\underline{w}]  .
\ee
It follows from \eqref{eq:boundonH} and the definition of the norm \eqref{eq:defofxi}
that the right hand side of \eqref{defofintop0} converges absolutely with respect to
the  operator norm, and furthermore,
$$
\| H[\underline{w}] \| \leq \| \underline{w} \|_\xi .
$$
Now we state the main result of this letter.
\begin{theorem} \label{thm:soinv} \label{thm:main}  Let $\underline{w} \in \WW_\xi$.
If $H[\underline{w}]$ is a $G$--invariant operator, then $H_{0,1}[\underline{w}]=0$ and $H_{1,0}[\underline{w}]=0$.
\end{theorem}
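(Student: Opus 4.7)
The plan is to reduce the $G$-invariance of $H[\underline{w}]$ to the $G$-invariance of each kernel $w_{m,n}$ separately, and then apply Corollary~\ref{lem:onepart} to $w_{0,1}$ and $w_{1,0}$.

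First I would derive a transformation rule
\[
\UU_\FF(R)\, H_{m,n}[\underline{w}]\, \UU_\FF(R)^* \;=\; H_{m,n}[\underline{w}^R],
\]
where $\underline{w}^R$ is the kernel obtained from $\underline{w}$ by the induced action of $R$ on $X$, i.e., a rotation of momenta together with the $k$-dependent $2\times 2$ unitary coming from the non-canonical polarization frame $\varepsilon_1(k), \varepsilon_2(k)$. Three ingredients enter: (a) $\UU_\FF(R)$ commutes with $H_f$, since $\omega(k)=|k|$ is rotation-invariant and so $\UU_\hh(R)$ commutes with the multiplication operator $\omega$; (b) the standard intertwining identities $\UU_\FF(R)\,a(f)\,\UU_\FF(R)^* = a(\UU_\hh(R)f)$ and its adjoint version for $a^*(f)$; (c) a change of variables in the weak integral defining $H_{m,n}[\underline{w}]$ that absorbs the polarization twist. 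Crucially, conjugation preserves the bi-grading $(m,n)$.

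Second I would invoke uniqueness of the representation $\underline{w} \mapsto H[\underline{w}]$, relying on the intrinsic, creation/annihilation-operator-free definition of $H_{m,n}[\underline{w}]$ given in the Appendix; this definition furnishes an inversion formula that recovers each $w_{m,n}$ from matrix elements of $H[\underline{w}]$. The $G$-invariance assumption then reads $H[\underline{w}] = H[\underline{w}^R]$ for every $R \in G$, and uniqueness forces $\underline{w}^R = \underline{w}$, i.e., each $w_{m,n}$ is $G$-invariant. Specializing to $(m,n)=(0,1)$, for each fixed $r \in I$ the function $(k,\lambda) \mapsto w_{0,1}(r;k,\lambda)$, extended by zero outside $X$, defines an element $h_r \in \hh$, and the $G$-invariance of $w_{0,1}$ is exactly $\UU_\hh(R) h_r = h_r$ for all $R \in G$. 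Corollary~\ref{lem:onepart} then yields $h_r = 0$ in $\hh$ for almost every $r$, and combined with the norm bound \eqref{eq:boundonH} this forces $H_{0,1}[\underline{w}] = 0$. The case $(m,n)=(1,0)$ is completely analogous, or alternatively follows by taking adjoints.

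The main obstacle I anticipate is Step~2. A naive particle-number decomposition cannot isolate individual $H_{m,n}$ within a fixed shift $m-n$: the terms $H_{0,1}, H_{1,2}, H_{2,3},\ldots$ all contribute in the same matrix-element sectors as soon as $H_f$ takes values above zero, and a direct vacuum-to-one-particle computation recovers only $w_{0,1}(0;\cdot)$, leaving $w_{0,1}(r;\cdot)$ for $r>0$ inaccessible. Genuinely inverting $\underline{w} \mapsto H[\underline{w}]$ via the Appendix's intrinsic formulation is what allows one to access all values of $r$. Step~1, by contrast, is routine modulo the bookkeeping required to thread the polarization twist through the integrand.
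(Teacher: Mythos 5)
Your overall strategy --- conjugate by $\UU_\FF(R)$ to get $H[\underline{w}^R]=H[\underline{w}]$, then invert the map $\underline{w}\mapsto H[\underline{w}]$ to conclude $\underline{w}^R=\underline{w}$, then apply Corollary~\ref{lem:onepart} to the one-particle kernels --- is coherent and your Step~1 is fine, but there is a genuine gap at Step~2, and you have in fact located it yourself without closing it. The Appendix does not furnish an inversion formula. It only gives an intrinsic definition of matrix elements $\langle\varphi_p,H_{m,n}[\underline{w}]\psi_q\rangle$ and a bound; it says nothing about how to recover $w_{m,n}(r;\cdot)$ for $r>0$ from the operator $H[\underline{w}]=\sum H_{m,n}[\underline{w}]$. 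As you correctly observe, matrix elements between $p$- and $(p+1)$-particle sectors with $p\geq 1$ receive contributions from all of $w_{0,1},w_{1,2},\dots,w_{p,p+1}$, and a vacuum-to-one-particle element only sees $w_{0,1}(0;\cdot)$. Writing ``genuinely inverting $\underline{w}\mapsto H[\underline{w}]$ via the Appendix's intrinsic formulation is what allows one to access all values of $r$'' asserts a uniqueness theorem you have not proved. Such a theorem is true but is precisely the nontrivial result (Theorem~3.3 of \cite{BCFS03}) that, as the introduction explains, the earlier proof in \cite{HH10-2} relied on and that the present paper is deliberately designed to avoid.

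The paper's route does not pass through uniqueness at all, and also does not need the full conjugation identity $\UU_\FF(R) H_{m,n}[\underline{w}]\UU_\FF(R)^*=H_{m,n}[\underline{w}^R]$. Instead it works directly at the level of a single matrix element $\langle g,H[\underline{w}]S_2(f_1\otimes f_2)\rangle$, which picks up contributions only from $w_{0,1}$ and $w_{1,2}$. Taking $g=f_1=\UU_\hh(R)f_{\epsilon,x}$ with $f_{\epsilon,x}$ an approximate delta function peaked at $x\in B_1^0$, the $w_{1,2}$ contribution and the cross term in $w_{0,1}$ die as $\epsilon\downarrow 0$ by weak convergence of $f_{\epsilon,x}$ against Hilbert--Schmidt kernels, while the surviving $w_{0,1}$ term concentrates the $H_f$-argument at $r=|x|$ because $|f_{\epsilon,x}|^2$ is a delta sequence. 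This simultaneously handles the decoupling from $w_{1,2}$ and the extraction of $w_{0,1}$ at arbitrary $r=|x|\in(0,1)$, which is exactly the combination your Step~2 would otherwise have to supply. Rotation invariance of the resulting expression then yields $\UU_\hh(R)w_{0,1}^{(|x|)}=w_{0,1}^{(|x|)}$ directly, and Corollary~\ref{lem:onepart} finishes the argument. If you want your proof to be self-contained, you should replace the appeal to inversion by this localization-in-$H_f$ device, or else explicitly import a uniqueness theorem for the kernels and prove it.
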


\begin{proof}
For  $f_1,f_2, g  \in \hh$ it follows  from  definition \eqref{defofintop}
(or respectively definitions  \eqref{eq:defofL2s} and  \eqref{eq:zero}), that
\begin{eqnarray}
\lefteqn{  \frac{1}{\sqrt{2}}\langle g    , H[\underline{w}] S_2 ( f_1 \otimes f_2 )  \rangle = }
  \label{eq:eq450}\\
&&  \frac{1}{2} \sum_{\lambda,\tilde{\lambda}} \int \overline{ g(\tilde{k},\tilde{\lambda})} f_{1}( \tilde{k}, \tilde{\lambda})
w_{0,1}(|\tilde{k}|; k, \lambda) f_2(k , \lambda)\frac{d^3 \tilde{k} d^3 k}{\sqrt{|k|}}
\label{eq:eq45} \\
&&  +  \frac{1}{2} \sum_{\tilde{\lambda},\lambda} \int \overline{ g(\tilde{k},\tilde{\lambda})} f_2( \tilde{k}, \tilde{\lambda})
w_{0,1}( |\tilde{k}| ; k, \lambda) f_{1}(k , \lambda)  \frac{d^3 \tilde{k} d^3 k}{\sqrt{|k|}}
  \label{eq:eq456} \\
&&  +  \sum_{\tilde{\lambda},\lambda_1,\lambda_2} \int \overline{ g(\tilde{k},\tilde{\lambda})}
w_{1,2}( 0 ;  \tilde{k}, \tilde{\lambda} , k_1, \lambda_1 , k_2, \lambda_2 )  f_{1}(k_1 , \lambda_1) f_2( k_2, \lambda_2)  \frac{ d^3 \tilde{k} ...  d^3 k_2}{\sqrt{|\tilde{k}||k_1||k_2|}}    .    \label{eq:eq4567}
\end{eqnarray}
Pick a function $\varphi \in C_0^\infty(B_1 ; \R) $ with $\int \varphi^2 (k) d^3 k  = 1 $,
and  for $x \in B_1^0 := \{ k \in \R^3 | |k| < 1\}$   define
\begin{align} \label{eq:defoffep}
 f_{\epsilon,x}(k,\lambda)  :=  2^{-1/4} {\epsilon^{-3/2}} \varphi( \epsilon^{-1} ( {x-k}) ) , \quad  \epsilon > 0.
\end{align}
Notice that  $f_{\epsilon,x}$ converges weakly to zero  in $L^2$  as $\epsilon \downarrow 0$.

We insert the choice  $f_1 = g = \UU_{\hh}(R) f_{\epsilon,x}$ and $f_2 = \UU_{\hh}(R) h$, with $h \in \hh$, into \eqref{eq:eq450}.
We claim that in  the limit $\epsilon \downarrow 0$, the terms in lines \eqref{eq:eq456}
and  \eqref{eq:eq4567} vanish. To this end note that
\begin{equation}
G_1(\tilde{k},k)  = f_2(\tilde{k},\tilde{\lambda})w_{0,1}(|\tilde{k}|,k,\lambda)/\sqrt{|k|}
\end{equation}
and
\begin{equation}
G_2(\tilde{k},k_1) = \int w_{1,2}(0,\tilde{k},\tilde{\lambda}, k_1,\lambda_1,k_2,\lambda_2)f_2(k_2.\lambda_2)\frac {d^3k_2}{\sqrt{|\tilde{k}||k_1||k_2|}}
\end{equation}
are kernels of Hilbert-Schmidt operators so by the weak convergence of $g = f_1$ to $0$, the terms in lines \eqref{eq:eq456}
and  \eqref{eq:eq4567} vanish.


An elementary calculation using the definition of the polarization vectors, the group of rotations,
and   \eqref{eq:defofUhh} shows that
$$
\sum_{\lambda=1,2} \left| (\UU_\hh f_{\epsilon , x })(k,\lambda)\right| =
 \sum_{\lambda=1,2} \left|  f_{\epsilon , x }(R^{-1}k,\lambda) \right|^2 , \quad \forall k \in \R^3 .
$$
Using this and the vanishing of  \eqref{eq:eq456}
and  \eqref{eq:eq4567} in the limit $\epsilon \downarrow 0$, we find
\begin{eqnarray}
\lefteqn{ \lim_{\epsilon \downarrow 0} \langle  \UU_{\hh}(R) f_{\epsilon,x}  , H[\underline{w}] S_2 ( \UU_{\hh}(R) f_{\epsilon,x}  \otimes  \UU_{\hh}(R) h ) \rangle  } \label{lhsinvR} \\
& = &  \lim_{\epsilon \downarrow 0} \frac{1}{\sqrt{2}} \sum_{\lambda,\tilde{\lambda}} \int   |f_{\epsilon,x}(R^{-1}\tilde{k},\tilde{\lambda})|^2
w_{0,1}(|\tilde{k}|; k, \lambda) ( \UU_\hh(R)  h)(k , \lambda)\frac{d^3 \tilde{k} d^3 k}{\sqrt{|k|}}  \label{lhsinvR000}  .
\end{eqnarray}
Moreover, using dominated convergence one finds
\begin{eqnarray}
\eqref{lhsinvR000} =  \frac{1}{\sqrt{2}} \sum_{\lambda,\tilde{\lambda}}  \int \lim_{\epsilon \downarrow 0} I(\epsilon, k,\lambda,\tilde{\lambda}, R,h) {d^3 k}  \label{dom:conv} ,
\end{eqnarray}
where we  introduced the notation
$$
I(\epsilon, k,\lambda,\tilde{\lambda}, R,h ) :=  \int  |f_{\epsilon,x}(R^{-1}\tilde{k},\tilde{\lambda})|^2 \frac {w_{0,1}(|\tilde{k}|; k, \lambda)}{|k|^{1/2}} ( \UU_\hh(R)  h)(k , \lambda)  d^3 \tilde{k}
$$
and  justified  dominated convergence by the estimate
\be \label{eq:domconvjust}
| I(\epsilon, k,\lambda,\tilde{\lambda}, R,h ) | \leq    \sup_{r \in I } | w_{0,1}(r ; k, \lambda)|| ( \UU_\hh(R)  h)(k , \lambda)|/\sqrt{|k|}
\ee
and the fact that the r.h.s. of \eqref{eq:domconvjust} is integrable w.r.t. $k$ by  the finiteness of \eqref{eq:defofnorm}.
Using that the square  of \eqref{eq:defoffep} yields a delta sequence
and that $I \ni r  \mapsto w(r; k ,\lambda)$ is for a.e. $k$ a continuous function we find
\begin{equation} \label{eq:resonforint}
\lim_{\epsilon \downarrow 0} I(\epsilon, k,\lambda,\tilde{\lambda}, R,h ) = 2^{-1/2} w_{0,1}( |x| ;  k, \lambda) ( \UU_\hh(R)  h)(k , \lambda)/|k|^{1/2},
 \end{equation}
$k$ a.e. (depending on $R$ and $h$) (notice that we choose  $x$ to lie in the interior
because otherwise we would get a factor $1/2$). Inserting this into  \eqref{dom:conv} we find
\begin{eqnarray}
 \eqref{lhsinvR} =   \sum_\lambda \int
w_{0,1}( |x| ;  k, \lambda) ( \UU_\hh(R)  h)(k , \lambda)\frac{d^3k}{|k|^{1/2}}  .  \label{pt:conv}
\end{eqnarray}
By rotation invariance of $H[w]$ it follows that  \eqref{lhsinvR}  is independent of $R$. In view of  \eqref{pt:conv} this
implies
\begin{equation}
\sum_\lambda \int
w_{0,1}(|x|;  k, \lambda)   h(k , \lambda)\frac{d^3k}{|k|^{1/2}}
 =  \sum_\lambda \int
w_{0,1}(|x|;  k, \lambda) ( \UU_\hh(R)  h)(k , \lambda)\frac{d^3k}{|k|^{1/2}} , \label{eq:invariancerel}
\end{equation}
for all $h \in \hh$,  $R \in G$, and $x \in B_1^0$.
Since  the  the representation  $\UU_\hh(R)$ is unitary and $h \in \hh$ is arbitrary, Eq.  \eqref{eq:invariancerel}
implies  that
\begin{equation} \label{eq:invofwx}
\UU_\hh(R) w^{(|x|)}_{0,1} = w^{(|x|)}_{0,1} , \quad \forall R \in G ,
\end{equation}
where we introduced the function  $w^{(|x|)}_{0,1} : (k,\lambda) \mapsto w_{0,1}(|x|  ;k,\lambda)$
which  by the finiteness of  \eqref{eq:defofnorm}  is an element of $L^2(\R^3 \times \Z_2)$.
Now  \eqref{eq:invofwx} and Corollary   \ref{lem:onepartttt} imply  that   $w^{(|x|)}_{0,1} = 0$.
Since $x \in B_1^0$ is arbitrary, we conclude that $w_{0,1}=0$ (the vanishing at the endpoint $|x|=1$ follows
from continuity) and hence  $H_{0,1}[\underline{w}]=0$. Similarly one shows
$H_{1,0}[\underline{w}]=0$.
\end{proof}

\section{Application}

\subsection{Nonrelativistic qed}
\label{sec:app}

In this subsection we introduce in  \eqref{eq:defofHami} below the Hamiltonian  which
describes, in  the framework of non-relativistic quantum electrodynamics, an
atom consisting of $N$ spin-less electrons and a nucleus with infinite mass and
point charge $Z = N$.
In \eqref{eq:defofatrep} we define a natural $G$--representation on the
Hilbert space of the $N$ electrons, which yields a representation on the
Hilbert space of the total system \eqref{eq:defofreptot}.
In Equation \eqref{eq:trafoG1}   of Proposition  \ref{lem:trafo}   it will be  shown   that, with respect to this
representation, the
Hamiltonian  \eqref{eq:defofHami}     is  a $G$--invariant operator.

The Hilbert space of the $N$ electrons is $\HH_{\rm at} := \bigwedge^N L^2(\R^3)$,
and  the Hilbert space of the total system is  $\HH_0 := \HH_{\rm at} \otimes \FF$.
The Hamiltonian is
\begin{equation} \label{eq:defofHami}
H := \sum_{j=1}^N ( p_j \otimes {\bf 1}  - e  A(x_j) )^2 +  e^2V_C \otimes {\bf 1}   + {\bf 1} \otimes H_f ,
\end{equation}
where $x_j \in \R^3$ is the position of the $j$-th electron, and $e$ is the electron's charge.
$$V_C  := \sum_{i < j} \frac{1}{|x_i - x_j |} - \sum_{j=1}^N \frac{Z}{|x_j|} , $$
with  $l=1,2,3$ and   $x \in \R^3$
\begin{align*}
A(x)   := A^{+}(x) + A^{-}(x) ,  \quad
A_l^{+}(x)  :=  a^*(\kappa_{l,x}) , \quad  A_l^{-}(x) :=  a(\kappa_{l,x})     ,
\end{align*}
 and for $(k,\lambda) \in \R^3 \times \Z_2$
\begin{equation} \label{eq:defofG}
\kappa_{l,x}(k,\lambda) := 1_{|k| \leq \Lambda} \frac{1}{\sqrt{2 |k|}} [ \varepsilon_\lambda(k)]_l  e^{- i k \cdot x }  .
\end{equation}
In \eqref{eq:defofG} the number $\Lambda  > 0$ serves as an ultraviolet cutoff.
The Hamiltonian \eqref{eq:defofHami} can realized as a selfadjoint operator as follows.
One can show that  \eqref{eq:defofHami}
defines  a semibounded closed form on  the natural domain of the
 operator
$(\Delta   \otimes 1 + 1 \otimes H_f )^{1/2}$,  with $\Delta := \sum_{j=1}^n p_j^2$.
By the second representation theorem this yields a unique self-adjoint operator
with domain equal to the natural domain of the operator $\Delta \otimes 1 + 1 \otimes H_f$
(for details see for  example   \cite{HH08}).

We define a $G$--representation, $\UU_{\rm at}$,  on the  Hilbert space $\HH_{\rm at}$ by
setting
\begin{equation} \label{eq:defofatrep}
(\mathcal{U}_{\rm at}(R) \psi)(x_1,...,x_N) := \psi(R^{-1} x_1, ... , R^{-1} x_N) ,
\end{equation}
for all $\psi \in \HH_{\rm at}$, $R \in G$, and $(x_1,...,x_N) \in \R^{3 N}$. On $\HH_0$ we define the tensor representation
\begin{equation} \label{eq:defofreptot}
\UU_0 := \UU_{\rm at} \otimes \UU_{\FF} .
\end{equation}
The next lemma will be used to show  Proposition  \ref{lem:trafo}  below.

\begin{lemma} \label{eq:trafoG}  For all $x \in \R^3$ and $R \in G$ one has   $\UU_\hh(R) \kappa_{l,x} = \sum_{m=1}^3 R_{lm}^{-1} \kappa_{m,R x}$ as an identity in $\hh$.
\end{lemma}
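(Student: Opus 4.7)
The plan is to reduce the claim to a direct computation by transporting the identity to the vector field side via the isomorphism $\phi$ of \eqref{eq:defofphi}, since $\UU_\hh$ is \emph{defined} through $\UU_\vv$ by \eqref{eq:defofUhh}. Concretely, it suffices to check
\[
\UU_\vv(R)\,\phi(\kappa_{l,x}) \;=\; \sum_{m=1}^3 (R^{-1})_{lm}\,\phi(\kappa_{m,Rx})
\]
as an identity in $\vv$, and then apply $\phi^{-1}$.

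First I would compute $\phi(\kappa_{l,x})$ explicitly. Using \eqref{eq:defofG} and the definition of $\phi$,
\[
[\phi(\kappa_{l,x})](k) \;=\; 1_{|k|\le\Lambda}\,\frac{e^{-ik\cdot x}}{\sqrt{2|k|}}\,\sum_{\lambda=1,2}\varepsilon_\lambda(k)\,[\varepsilon_\lambda(k)]_l.
\]
The key algebraic input is the identity $\sum_{\lambda=1,2}[\varepsilon_\lambda(k)]_m[\varepsilon_\lambda(k)]_l = P(k)_{ml}$, where $P(k):=I-\hat k\hat k^T$ is the transverse projection, valid a.e.\ because $\{\varepsilon_1(k),\varepsilon_2(k),k/|k|\}$ is an orthonormal basis of $\R^3$ by construction. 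Hence $[\phi(\kappa_{l,x})](k) = 1_{|k|\le\Lambda}\,e^{-ik\cdot x}(2|k|)^{-1/2}\,P(k)e_l$, where $e_l$ is the $l$-th standard basis vector.

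Next I would apply $\UU_\vv(R)$ using \eqref{eq:defofvec}, and use three orthogonality facts: $|R^{-1}k|=|k|$, $R^{-1}k\cdot x = k\cdot Rx$, and the covariance
\[
R\,P(R^{-1}k)\,R^{-1} \;=\; P(k),
\]
which follows immediately from $R(R^{-1}k)(R^{-1}k)^T R^{-1} = kk^T$ and $RR^{-1}=I$. These give
\[
[\UU_\vv(R)\phi(\kappa_{l,x})](k) \;=\; 1_{|k|\le\Lambda}\,\frac{e^{-ik\cdot Rx}}{\sqrt{2|k|}}\,P(k)\,Re_l.
\]

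Finally, expanding $Re_l=\sum_m R_{ml}e_m$ and recognizing $1_{|k|\le\Lambda}e^{-ik\cdot Rx}(2|k|)^{-1/2}P(k)e_m = [\phi(\kappa_{m,Rx})](k)$ from the first step (with $x\to Rx$, $l\to m$), one obtains
\[
\UU_\vv(R)\phi(\kappa_{l,x}) \;=\; \sum_m R_{ml}\,\phi(\kappa_{m,Rx}).
\]
Since $R^T=R^{-1}$ gives $R_{ml}=(R^{-1})_{lm}$, applying $\phi^{-1}$ yields the claimed identity in $\hh$. There is no real obstacle; the only substantive point is the polarization-sum identity $\sum_\lambda \varepsilon_\lambda\varepsilon_\lambda^T = P$, which is what decouples the statement from any specific (measurable) choice of $\varepsilon_1,\varepsilon_2$.
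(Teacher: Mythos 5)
Your proof is correct and follows essentially the same route as the paper's: both are direct computations whose only substantive input is the polarization sum identity $\sum_\lambda [\varepsilon_\lambda(k)]_i[\varepsilon_\lambda(k)]_j = \delta_{ij} - k_ik_j/|k|^2$ together with orthogonality of $R$. You organize the calculation on the $\vv$-side by computing $\phi(\kappa_{l,x})$ explicitly and invoking the covariance $RP(R^{-1}k)R^{-1}=P(k)$, while the paper carries out the equivalent index manipulation directly on $\hh$; the substance is identical.
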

\begin{proof} Using the definitions  \eqref{eq:defofvec}, \eqref{eq:defofphi},  \eqref{eq:defofUhh},  \eqref{eq:defofG},    and the identity
$\sum_{\lambda} [ \varepsilon_\lambda(k) ]_i [\varepsilon_\lambda(k)]_j = \delta_{ij} - \frac{k_i k_j}{|k|^2}$,
one finds
\be
\begin{split}
( \UU_\hh \kappa_{l,x} )(k,\lambda)  = \varepsilon_\lambda(k) \cdot R ( \phi \kappa_{l,x})(R^{-1}k )
 = \varepsilon_\lambda(k) \cdot R \sum_{\lambda'} \varepsilon_{\lambda'}(R^{-1} k) \kappa_{l,x}(\lambda', R^{-1} k )  \cr
 = \sum_{\lambda'}  \varepsilon_\lambda(k) \cdot R \varepsilon_{\lambda'}(R^{-1} k ) \left[ \varepsilon_{\lambda'}(R^{-1} k) \right]_l e^{- i   R^{-1} k \cdot x }
= [ R^{-1} \varepsilon_\lambda(k) ]_l e^{- i  k \cdot R x   } \cr
 =  \sum_{m=1}^3 R_{lm}^{-1} \kappa_{m,R x}(k,\lambda) .
\end{split}
\ee
\end{proof}

Using Lemma  \ref{eq:trafoG} it is now straight forward to prove the following proposition,
which in physicists terminology states that the vector potential transforms as a so called ``vector field'' and that the Hamiltonian
transforms as a so called ``scalar''.

\begin{proposition}
 \label{lem:trafo} The following transformation properties hold.  For all $R \in G$,
\begin{eqnarray}
\UU_0(R)  A^{\pm}(x_j) \UU_0(R)^* & = & R^{-1} A^{\pm}(x_j) , \quad {\rm on \ the  \ natural \ domain \ of } \ (1 \otimes H_f)^{1/2} ,   \label{eq:trafoah} \\
\UU_\FF(R)  H_f \UU_\FF(R)^* & = & H_f  ,  \label{eq:trafoHf} \\
\UU_0(R)  H_g \UU_0(R)^* & = & H_g  . \label{eq:trafoG1}
\end{eqnarray}
\end{proposition}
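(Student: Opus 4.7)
The plan is to establish the three identities in order, deducing each from the previous where possible, and to organize the work around a decomposition of $H$ into its manifestly rotation-related pieces.

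For \eqref{eq:trafoHf} I will exploit that the dispersion $\omega(k,\lambda)=|k|$ is manifestly rotation invariant, so the multiplication operator $\omega$ on $\hh$ commutes with $\UU_\hh(R)$. Using that $\Gamma$ is multiplicative on unitaries, $\UU_\FF(R)=\Gamma(\UU_\hh(R))$ commutes with $\Gamma(e^{-i\omega t})=e^{-iH_ft}$ for every $t\in\R$, and hence with $H_f$ by Stone's theorem.

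For \eqref{eq:trafoah} I will identify $\HH_0$ with (the antisymmetric subspace of) $\FF$-valued functions of $(x_1,\dots,x_N)$. On this picture $A_l^\pm(x_j)$ acts fibrewise as $a^\#(\kappa_{l,x_j})$, and $\UU_0(R)$ acts as $\UU_\FF(R)$ together with the argument change $x_i\mapsto R^{-1}x_i$. Using the standard intertwining relation $\UU_\FF(R)\,a^\#(f)\,\UU_\FF(R)^*=a^\#(\UU_\hh(R)f)$ (which itself follows directly from $\UU_\FF=\Gamma(\UU_\hh)$ and the creation/annihilation formulas), a short computation reduces the left hand side of \eqref{eq:trafoah} to $a^\#(\UU_\hh(R)\kappa_{l,R^{-1}x_j})$ acting fibrewise. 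Lemma \ref{eq:trafoG}, applied at the point $R^{-1}x_j$ so that $R\!\cdot\!R^{-1}x_j=x_j$, rewrites the argument of $a^\#$ as $\sum_m R^{-1}_{lm}\kappa_{m,x_j}$, which yields $[R^{-1}A^\pm(x_j)]_l$ as desired.

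For \eqref{eq:trafoG1} I decompose $H$ into $1\otimes H_f$, the Coulomb part $e^2 V_C\otimes 1$, and the minimally coupled kinetic terms $\sum_j(p_j\otimes 1-eA(x_j))^2$. The first is handled by \eqref{eq:trafoHf} together with the trivial commutation of $\UU_{\rm at}(R)$ with $1\otimes H_f$. The Coulomb part is a function of the rotation-invariant quantities $|x_i-x_j|$ and $|x_j|$, hence is fixed by $\UU_{\rm at}(R)$. For the kinetic part, a chain-rule computation on $\HH_{\rm at}$ gives $\UU_{\rm at}(R)p_{j,l}\UU_{\rm at}(R)^*=\sum_m R^{-1}_{lm}p_{j,m}$; combined with \eqref{eq:trafoah}, the vector $\pi_{j,l}:=p_{j,l}-eA_l(x_j)$ satisfies $\UU_0(R)\pi_{j,l}\UU_0(R)^*=\sum_m R^{-1}_{lm}\pi_{j,m}$. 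Squaring and summing over $l$, the components $\pi_{j,m}$ need not commute, but by the orthogonality relation $\sum_l R^{-1}_{lm}R^{-1}_{ln}=\delta_{mn}$ the cross terms collapse and one obtains
\[
\UU_0(R)\sum_l\pi_{j,l}^2\UU_0(R)^*=\sum_{m,n}\Bigl(\sum_l R^{-1}_{lm}R^{-1}_{ln}\Bigr)\pi_{j,m}\pi_{j,n}=\sum_m\pi_{j,m}^2.
\]

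The main technical obstacle is domain bookkeeping in \eqref{eq:trafoah}: since $A^\pm(x_j)$ is unbounded, the fibrewise manipulations must be justified on a dense invariant set. The finiteness of $\int|\kappa_{l,x}(k,\lambda)|^2|k|^{-1}dk$ gives the standard $(1\otimes H_f)^{1/2}$-relative bound on $A^\pm(x_j)$, while \eqref{eq:trafoHf} guarantees that $\UU_0(R)$ leaves the form domain of $1\otimes H_f$ invariant, so all the identities above can first be verified on, e.g., finite-particle vectors with Schwartz space dependence on $(x_1,\dots,x_N)$ and then extended by continuity.
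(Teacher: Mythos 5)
Your proposal is correct and follows essentially the same route as the paper: rotation invariance of $\omega$ plus $\Gamma(e^{-i\omega t})=e^{-iH_ft}$ for \eqref{eq:trafoHf}, the covariance $\UU_\FF(R)a^\#(f)\UU_\FF(R)^*=a^\#(\UU_\hh(R)f)$ together with Lemma~\ref{eq:trafoG} and $\UU_{\rm at}(R)x_j\UU_{\rm at}(R)^*=R^{-1}x_j$ for \eqref{eq:trafoah}, and the transformation $\UU_{\rm at}(R)p_j\UU_{\rm at}(R)^*=R^{-1}p_j$ combined with rotation invariance of $V_C$ for \eqref{eq:trafoG1}. You supply more detail than the paper (the orthogonality collapse of cross terms in $\sum_l\pi_{j,l}^2$ and the form-domain bookkeeping), which the paper compresses into ``holds a priori in the sense of forms,'' but the underlying argument is the same.
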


\begin{proof} Let $\underline{x} = (x_1,...,x_N)$ and $R \underline{x} = (R x_1,..., Rx_N)$.
For $l=1,2,3$ let $x_{j,l}$ denote $l$-th component of $x_j$. Then
$$
( \UU_{\rm at}(R) [x_{j,l} \psi] )(\underline{x}) = [x_{j,l} \psi](R^{-1} \underline{x} ) = [R^{-1} x_j]_l  (\UU_{\rm at}(R) \psi)(\underline{x}) .
$$
This yields the transformation property
\be  \label{eq:trafoxp1}
\UU_{\rm at}(R) x_j \UU_{\rm at}(R)^* = R^{-1} x_j  .
\ee
Let $\partial_{j,l}$ denote the partial derivative with respect to $x_{j,l}$. Then
$$
 ( \UU_{\rm at}(R) [\partial_{j,l} \psi])(\underline{x}) =  [\partial_{j,l} \psi])( R^{-1} \underline{x}) = \sum_{k=1}^3R_{l k}^{-1} \partial_{j,k} [ \UU_{\rm at}(R)  \psi]( \underline{x})  .
$$
This yields the transformation property
\be  \label{eq:trafoxp2}
\UU_{\rm at}(R) p_j \UU_{\rm at}(R)^* = R^{-1} p_j .
\ee
By Lemma \ref{eq:trafoG}, and   \eqref{eq:trafoxp1}, we find
$$
\UU_0(R)  A_l^{\pm}(x_j) \UU_0(R)^* =  a^*( \UU_\hh(R) \kappa_{l,  R^{-1} x_j}) = \sum_{m=1}^3R^{-1}_{lm}  A_m^{\pm}(x_j) .
$$
This implies  \eqref{eq:trafoah}. Eq. \eqref{eq:trafoHf}
can be seen using  the  definition \eqref{eq:defofH} and the fact that
 $\omega$ only depends on $|k|$.
Now \eqref{eq:trafoHf},   \eqref{eq:trafoah},     \eqref{eq:trafoxp1}, and  \eqref{eq:trafoxp2} imply  that  Eq. \eqref{eq:trafoG1} holds a priori in the sense of forms, and hence
as an operator in $\HH_0$.
\end{proof}

For the application of Theorem \ref{thm:main}   in the context of operator theoretic renormalization
we will need Lemma \ref{lem:red}, stated below. To this end, we
first consider  the atomic Hamiltonian
$$H_{\rm at}:=\sum_{j=1}^N p_j^2 +  V_C $$
acting on $\HH_{\rm at}$.
It is straight forward to see that $H_{\rm at}$ is $G$--invariant.
It is well known   that the infimum of the spectrum, $E_{\rm at} := \inf \sigma( H_{\rm at})$,  is an eigenvalue.
We will need the
following rather restrictive Hypothesis.

\medskip

\noindent
{\bf (H)} $E_{\rm at}$ is a   non-degenerate eigenvalue of $H_{\rm at}$.

\medskip

Now suppose   Hypothesis {\bf (H)} holds.
We will denote
by $\varphi_{\rm at}$ the normalized eigenstate of $H_{\rm at}$ with eigenvalue $E_{\rm at}$.
By   hypothesis  it follows that  $\varphi_{\rm at}$ is rotation invariant, since
one dimensional representations of $G$  are trivial. We introduce the
projection $P_{\rm at} := | \varphi_{\rm at} \rangle \langle \varphi_{\rm at} |$ in $\HH_{\rm at}$
and set $P :=  P_{\rm at} \otimes {\bf 1}$.
For later use  we define the map
\begin{equation} \label{defofV}
V_{\rm at}: \FF \to \ran P ,  \quad  \eta \mapsto \varphi_{\rm at} \otimes \eta .
\end{equation}
It follows immediately from  properties of the tensor product that $V_{\rm at}$ is   unitary.

\begin{lemma} \label{lem:red} Suppose {\bf (H)} holds. Let $T$ be a bounded  $G$--invariant operator on $\HH_0$.  Then $V_{\rm at}^* P T P V_{\rm at}$ is
 a $G$--invariant operator on $\FF$.
\end{lemma}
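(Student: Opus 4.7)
My plan is to prove the lemma by verifying that the two basic building blocks of $V_{\rm at}^* P T P V_{\rm at}$ behave well under the $G$-action, and then chaining the resulting intertwining relations. The strategy has three short steps.

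First I would record the fact that $\UU_{\rm at}(R) \varphi_{\rm at} = \varphi_{\rm at}$ for every $R \in G$. This uses Hypothesis \textbf{(H)} together with the $G$-invariance of $H_{\rm at}$: the one-dimensional eigenspace spanned by $\varphi_{\rm at}$ is preserved by $\UU_{\rm at}(R)$, and a one-dimensional unitary representation of $G=\mathrm{SO}(3)$ is necessarily trivial (already noted in the text just above the lemma). Consequently
\begin{equation}
\UU_{\rm at}(R)\, P_{\rm at}\, \UU_{\rm at}(R)^{*} = P_{\rm at}, \qquad \UU_0(R)\, P\, \UU_0(R)^{*} = P,
\end{equation}
so $P$ commutes with every $\UU_0(R)$.

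Next I would check the intertwining property of $V_{\rm at}$, namely
\begin{equation}
\UU_0(R)\, V_{\rm at} = V_{\rm at}\, \UU_\FF(R), \qquad V_{\rm at}^{*}\, \UU_0(R) = \UU_\FF(R)\, V_{\rm at}^{*}.
\end{equation}
Indeed, by the definitions \eqref{eq:defofreptot} and \eqref{defofV} together with the invariance of $\varphi_{\rm at}$,
\begin{equation}
\UU_0(R) V_{\rm at} \eta = \bigl(\UU_{\rm at}(R)\varphi_{\rm at}\bigr) \otimes \UU_\FF(R)\eta = \varphi_{\rm at} \otimes \UU_\FF(R)\eta = V_{\rm at}\, \UU_\FF(R)\eta,
\end{equation}
for all $\eta \in \FF$. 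Taking adjoints (and using unitarity of $\UU_0(R)$ and $\UU_\FF(R)$) gives the second identity.

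Finally, I would combine the three relations — the intertwining of $V_{\rm at}$, the commutation of $P$ with $\UU_0(R)$, and the $G$-invariance of $T$ — to compute
\begin{equation}
\UU_\FF(R)\, V_{\rm at}^{*} P T P V_{\rm at}\, \UU_\FF(R)^{*} = V_{\rm at}^{*}\, \UU_0(R)\, P T P\, \UU_0(R)^{*}\, V_{\rm at} = V_{\rm at}^{*} P\, \UU_0(R) T \UU_0(R)^{*} P\, V_{\rm at} = V_{\rm at}^{*} P T P V_{\rm at},
\end{equation}
which is the desired $G$-invariance. There is essentially no analytic obstacle here; the only delicate point is the logical step connecting Hypothesis \textbf{(H)} to the pointwise invariance $\UU_{\rm at}(R)\varphi_{\rm at}=\varphi_{\rm at}$, which I would want to state carefully before using it.
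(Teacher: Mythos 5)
Your proof is correct and follows essentially the same route as the paper's: both rely on the rotation invariance of $\varphi_{\rm at}$ (established via Hypothesis \textbf{(H)} and triviality of one-dimensional $G$-representations), the $G$-invariance of $T$, and the structure of the tensor product. The only cosmetic difference is that the paper carries out the calculation at the level of matrix elements $\langle \eta_1, V_{\rm at}^*PTPV_{\rm at}\eta_2\rangle$, whereas you package the same ingredients as operator-level intertwining relations $\UU_0(R)V_{\rm at}=V_{\rm at}\UU_\FF(R)$ and $[\UU_0(R),P]=0$ before conjugating.
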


\begin{proof}
For any $\eta_1, \eta_2 \in \FF$ and $R \in G$,
\be
\begin{split}
\big\langle\eta_1 ,  V_{\rm at}^*  P  T P  V_{\rm at} \eta_2 \big\rangle   =    \big\langle \varphi_{\rm at} \otimes \eta_1     ,   T  \ [ \varphi_{\rm at} \otimes  \eta_2  ] \big\rangle =
\big\langle \varphi_{\rm at}  \otimes  \eta_1  , \UU_0(R) \  T \ \UU_0(R)^* \ [   \varphi_{\rm at} \otimes \eta_2  ] \big\rangle \nonumber \cr
= \big\langle   \varphi_{\rm at} \otimes  ( \UU_\FF(R)^*  \eta_1 ) ,    T  \  [    \varphi_{\rm at}   \otimes  (\UU_\FF(R)^* \eta_2)  ]  \big\rangle  =
\big\langle\UU_\FF(R)^*  \eta_1 ,  V_{\rm at}^*  P  T P  V_{\rm at}  \UU_\FF(R)^* \eta_2 \big\rangle \nonumber\cr =
\big\langle \eta_1 , \UU_\FF(R)  V_{\rm at}^*  P  T P  V_{\rm at}  \UU_\FF(R)^* \eta_2 \big\rangle
\end{split}
\ee
where we used the rotation invariance of $\varphi_{\rm at}$. Since $\eta_1, \eta_2$ are arbitrary, the Lemma follows.
\end{proof}

\subsection{Smooth Feshbach}
\label{sec:smo}

In this subsection we first introduce  the so called Feshbach operator, and then
state  in Corollary \ref{cor:main}, below, the main application of  Theorem \ref{thm:main}.
This Corollary can be used in operator theoretic renormalization to show that marginal terms are absent \cite{HH10-2}.

Let $\chi$ and $\overline{\chi}$ be commuting, nonzero bounded operators, acting on a separable Hilbert space $\HH$
and satisfying $\chi^2 + \overline{\chi}^2=1$.
A {\it Feshbach pair} $(H,T)$ for $\chi$ is a pair of
closed operators with the same domain,
$$
H,T : D(H) = D(T) \subset \HH \to \HH
$$
such that $H,T, W := H-T$, and the operators
\begin{align*}
&W_\chi := \chi W \chi , & &W_{\overline{\chi}} := \overline{\chi} W \chib \\
&H_\chi :=T + W_\chi , & &H_{\overline{\chi}} := T + W_{\chib} ,
\end{align*}
defined on $D(T)$ satisfy the following assumptions:
\begin{itemize}
\item[(a)] $\chi T \subset T \chi$ and $\chib T \subset T \chib$,
\item[(b)] $H_{\chib}, T  : D(T) \cap \ran \chib \to \ran \chib$ are  bijections with bounded inverse,
\item[(c)] $\chib H_{\chib}^{-1} \chib W \chi : D(T) \subset \HH \to \HH$ is a bounded operator.
\end{itemize}
Here we used the notation   $H_{\chib}^{-1} \chib :=  \left( H_{\chib} \upharpoonright \ran \chib \right)^{-1} \chib$.
Given a Feshbach pair $(H,T)$ for $\chi$, the operator
\begin{align} \label{eq:defoffesh}
&F_\chi(H,T) := H_\chi - \chi W \chib H_{\chib}^{-1} \chib W \chi
\end{align}
is called Feshbach operator.
In  \cite{GH08} it is shown that the full spectral information of the original operator $H$
can be recovered by the  restriction of  the Feshbach operator to  any closed subspace  $V$ with the property that
$\ran \chi \subset V \subset \HH$ and that $\chi V \subset V $.

In operator theoretic renormalization one typically chooses  the following
realization for the operators $\chi$ and $\overline{\chi}$.
Let $\eta  \in C([0,1];[0,1])$ be a function such that there exist two numbers $a$ and $b$ with  $0 < a < b  <1$,
$\eta|_{[0,a]}  = 1$,  and $\eta|_{[b,1]} = 0$.
Setting  $\overline{\eta} := (1 - \eta^2)^{1/2}$ it follows that $\eta^2 + \overline{\eta}^2 =1$.
The operators $\boldsymbol{\chi} := P_{\rm at}  \otimes \eta(H_f)$  and $\boldsymbol{\overline{\chi}}  := ( {\bf 1} - P_{\rm at} )\otimes {\bf 1} + P_{\rm at} \otimes \overline{\eta}(H_f)$
satisfy
$$\boldsymbol{\chi}^2 + \boldsymbol{\overline{\chi}}^2=1 .$$

\begin{corollary}\label{cor:main}  The following statements hold.
\begin{itemize}
\item[(a)]
Fix $z,g \in \C$. Suppose Hypothesis {\bf (H)} holds and that  $(H_g - z, H_0 - z)$ is a Feshbach pair for  $\boldsymbol{\chi}$.
Then the operator $F_{z,g} := V_{\rm at}^* P  F_{\boldsymbol{\chi}}(H_g - z,H_0-z)  P  V_{\rm at}$ is invariant under rotations.
If there exists a $\underline{w}_{z,g} \in \WW_\xi$ such that $F_{z,g} = H[\underline{w}_{z,g}]$,
then $H_{0,1}[\underline{w}_{z,g}]=H_{0,1}[\underline{w}_{z,g}]=0$.
\item[(b)]  Suppose $\chi  = \eta(H_f)$ and  $(H, T)$ is a Feshbach pair for  $\chi$.
Assume that $H$ and $T$ are rotation invariant operators on $\FF$.   Then
$F_\chi(H,T)$ is invariant under rotations. If there exists a $\underline{w} \in \WW_\xi$ such
that  $F_\chi(H,T)=H[\underline{w}]$, then   $H_{0,1}[\underline{w}]=H_{0,1}[\underline{w}]=0$.
\end{itemize}
\end{corollary}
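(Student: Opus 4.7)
My plan is to deduce both parts of the Corollary from Theorem \ref{thm:main} by establishing rotation invariance of the relevant Feshbach operator and then, in case (a), transporting that invariance to an operator on $\FF$ via Lemma \ref{lem:red}.

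The first step is to show that all building blocks of the Feshbach operator $F_{\boldsymbol{\chi}}(H_g-z, H_0-z)$ are $\UU_0$--invariant. For the projection part of $\boldsymbol{\chi}$: Hypothesis \textbf{(H)} forces the one-dimensional representation of $G$ on $\ran P_{\rm at}$ to be trivial, so $\varphi_{\rm at}$ is $\UU_{\rm at}$--invariant (up to an inessential phase) and therefore $P_{\rm at}$ commutes with $\UU_{\rm at}(R)$. Combined with \eqref{eq:trafoHf}, which gives $\UU_\FF(R) f(H_f) \UU_\FF(R)^* = f(H_f)$ for any Borel $f$, I conclude that both $\boldsymbol{\chi}$ and $\boldsymbol{\overline{\chi}}$ are $\UU_0$--invariant. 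Together with Proposition \ref{lem:trafo} and the easy fact that $H_0$ is $\UU_0$--invariant, this implies that $H_g - z$, $H_0 - z$, their difference $W_g$, and the sandwiched operators $H_{\boldsymbol{\chi}}$ and $H_{\boldsymbol{\overline{\chi}}}$ are all $\UU_0$--invariant.

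Next I need to pass through the inverse in \eqref{eq:defoffesh}. The point is that $\UU_0(R)$ commutes with $\boldsymbol{\overline{\chi}}$, so it preserves $\ran \boldsymbol{\overline{\chi}}$; since $H_{\boldsymbol{\overline{\chi}}}$ restricted to $\ran \boldsymbol{\overline{\chi}}$ is a bijection onto $\ran \boldsymbol{\overline{\chi}}$ that intertwines with $\UU_0(R)$, the same is true of its inverse. Using the algebraic identity for $F_{\boldsymbol{\chi}}$ in \eqref{eq:defoffesh} it then follows that $F_{\boldsymbol{\chi}}(H_g - z, H_0 - z)$ is $\UU_0$--invariant. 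Since $P$ commutes with $\UU_0(R)$ as well, Lemma \ref{lem:red} applies with $T$ replaced by this Feshbach operator (one only uses boundedness on $\ran P$, which is what the sandwich $P \cdot P$ provides); thus $F_{z,g}$ is $\UU_\FF$--invariant on $\FF$. The hypothesis $F_{z,g} = H[\underline{w}_{z,g}]$ now puts us in the setting of Theorem \ref{thm:main}, which yields $H_{0,1}[\underline{w}_{z,g}] = H_{1,0}[\underline{w}_{z,g}] = 0$.

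Part (b) is a simplification of the same argument. Because $\chi = \eta(H_f)$ and $\overline{\chi} = \overline{\eta}(H_f)$ are functions of $H_f$, \eqref{eq:trafoHf} immediately yields their $\UU_\FF$--invariance; combined with the standing assumption that $H$ and $T$ are $\UU_\FF$--invariant, the same bijection-and-inverse argument as above shows that $F_\chi(H,T)$ is $\UU_\FF$--invariant on $\FF$, and Theorem \ref{thm:main} finishes the proof. The main subtle point throughout is the passage from invariance of $H_{\boldsymbol{\overline{\chi}}}$ (resp.\ $H_{\overline{\chi}}$) to invariance of its inverse on the invariant subspace $\ran \boldsymbol{\overline{\chi}}$ (resp.\ $\ran \overline{\chi}$); everything else is bookkeeping on the Feshbach formula and a direct appeal to Theorem \ref{thm:main}.
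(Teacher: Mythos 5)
Your proof is correct and follows essentially the same route as the paper's: establish rotation invariance of the Feshbach operator from the invariance of its constituents (the invariance of $\boldsymbol{\chi}$, $\boldsymbol{\overline{\chi}}$, $W$, and the resolvent on $\ran\boldsymbol{\overline{\chi}}$), reduce to an operator on $\FF$ via Lemma~\ref{lem:red} in part~(a), and then invoke Theorem~\ref{thm:main}. The paper compresses the invariance of $F_{\boldsymbol{\chi}}$ into a one-line citation of Proposition~\ref{lem:trafo} and the definition~\eqref{eq:defoffesh}; your more detailed unwinding of that step (in particular the intertwining of the inverse restricted to $\ran\boldsymbol{\overline{\chi}}$) is the only real difference, and it is exactly what the paper leaves implicit.
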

\begin{proof} (a). From Lemma  \ref{lem:trafo} and the definition \eqref{eq:defoffesh}
it follows that $F_{\boldsymbol{\chi}}(H_g - z,H_0-z)$ is rotation invariant. Thus
the rotation invariance of $F_{z,g}$  now follows from  Lemma \ref{lem:red}.
Hence by Theorem \ref{thm:main}, $H_{0,1}[\underline{w}_{z,g}]=H_{0,1}[\underline{w}_{z,g}]=0$.
(b). From Lemma  \ref{lem:trafo} and the definition \eqref{eq:defoffesh}
it follows that $F_\chi(H,T)$ is rotation invariant.
Hence by Theorem \ref{thm:main}, $H_{0,1}[\underline{w}]=H_{0,1}[\underline{w}]=0$.
\end{proof}

\begin{remark}
Part~(a) of the corollary  is used for a so called  initial   Feshbach
operator and Part~(b) is used for each renormalization step.
Corollary \ref{cor:main}  is stated under the assumptions that
the Feshbach operator can be expressed  in terms of
integral kernels  \eqref{defofintop0}.  In  \cite{HH10-2} this assumption
is verified for the initial step, see also \cite{BFS98}.  For the renormalization step
this property  is  shown to hold under the natural assumptions needed  for  operator theoretic renormalization, \cite{BCFS03}.
\end{remark}



\appendix

\section{Appendix}

In this appendix we give a rigorous definition of \eqref{defofintop}, which does not
involve creation or annihilation operators,
and  we provide a proof of  \eqref{eq:boundonH}.
We introduce the  Hilbert space
$
L^2_s( \left\{ \R^3 \times \Z_2 \right\}^n )$
of complex valued square integrable functions 
which are  totally   symmetric with  respect to the interchange of   arguments belonging to
different  factors  of the $n$-fold Cartesian product.
We will identify this Hilbert space with a subspace of Fock space,
by means of the canonical isomorphism of Hilbert spaces
$$
\FF_n \cong L^2_s( \left\{ \R^3 \times \Z_2 \right\}^n ) .
$$
Let $\varphi_p \in \FF_p$ and $\psi_q \in \FF_q$.
If
\begin{equation}\label{eq:cond22}
 p - m = q - n \geq 0 ,
\end{equation}
we define
\begin{eqnarray} \label{eq:defofL2s}
  \langle \varphi_p , H_{m,n}[\underline{w}] \psi_q \rangle  &:=    &
   \sqrt{\frac{ q !   }{l ! }}   \sqrt{\frac{  p !  }{l ! }}
\int_{X^m \times X^l \times X^n }  d \hat{K}^{(l)}  \frac{d \tilde{K}^{(m)}}{| \tilde{K}^{(m)} |^{1/2}} \frac{ d {K}^{(n)}}{ |{K}^{(n)}|^{1/2}}  \\
&& \times
\overline{ \varphi_p(\tilde{K}^{(m)}  ,  \hat{K}^{(l)}  ) }
w_{m,n}( \Sigma[\hat{K}^{(l)}] ;  \tilde{K}^{(m)} ,  {K}^{(n)})
 \psi_q({K}^{(n)} , \hat{K}^{(l)} )
 , \nonumber
\end{eqnarray}
where we used the following definitions,  $l :=  p -  m =  q-n$.
If  \eqref{eq:cond22} does not hold, we define
\begin{equation} \label{eq:zero}
\langle \varphi_p , H_{m,n}[\underline{w}] \psi_q \rangle  := 0 .
\end{equation}
 The goal
of the remaining part of the Appendix is to  show \eqref{eq:boundonH}.
It will then follow from \eqref{eq:boundonH}  and the  Riesz representation theorem, that
$H_{m,n}[\underline{w}]$ defines a bounded operator on Fock space.
In the case of   \eqref{eq:cond22}  we estimate using Cauchy-Schwarz
\begin{eqnarray}
\lefteqn{  |  \langle \varphi_p , H_{m,n}[\underline{w}] \psi_q \rangle  |  \leq  } \nonumber  \\
&&    D_{l,p}(\varphi_p) D_{l,q}(\psi_q)
 \left\{ \int_{X^m \times X^n}  | {K}^{(n)}|^{-2}  |\tilde{K}^{(m)}|^{-2}
\sup_{r \in I} |  w_{m,n}( r ;  \tilde{K}^{(m)} ,  {K}^{(n)} ) |^2  \right\}^{1/2} , \label{eq:defofL2svv}
\end{eqnarray}
where we defined
\begin{eqnarray*}
 D_{l,p}(\varphi_p) :=  \sqrt{\frac{  p !  }{l ! }}
\left\{ \int_{ X^m \times X^l }    d \tilde{K}^{(m)}  d \hat{K}^{(l)}
|\tilde{K}^{(m)}| \left|  \varphi_p(\tilde{K}^{(m)}  ,  \hat{K}^{(l)}  )
P_{\rm red}(\tilde{K}^{(m)}, \hat{K}^{(l)})\right|^2 \right\}^{1/2}  ,
\end{eqnarray*}
and inserted   $P_{\rm red}(\tilde{K}^{(m)}, \hat{K}^{(l)})
:=   1_{\Sigma[\tilde{K}^{(m)}] + \Sigma[\hat{K}^{(l)}] \leq 1}$ justified by  \eqref{eq:supportprop}.
Now using the symmetry property of the wavefunctions, one finds
\begin{eqnarray*}
 D_{l,p}(\varphi_p) && =  \sqrt{\frac{  p !  }{l ! }}
\Big\{ \int_{ X^1 \times X^{m-1}  \times X^l }  d K^{(1)}  d \tilde{K}^{(m-1)}  d \hat{K}^{(l)}
| {K}^{(1)}|    |\tilde{K}^{(m-1)}| \\
&& \times \left|  \varphi_p({K}^{(1)}, \tilde{K}^{(m-1)}  ,  \hat{K}^{(l)}  )
P_{\rm red}({K}^{(1)}, \tilde{K}^{(m-1)}, \hat{K}^{(l)})\right|^2 \Big\}^{1/2}  \\
&&=    \sqrt{\frac{  p !  }{( l+ 1) ! }}
\Big\{ \int_{  X^{m-1}  \times X^{l+1} }    d \tilde{K}^{(m-1)}  d \hat{K}^{(l+1)}
  |\tilde{K}^{(m-1)}| \\
&& \times \left| ( \Sigma[\hat{K}^{(l+1)}])^{1/2} \varphi_p( \tilde{K}^{(m-1)}  ,  \hat{K}^{(l+1)}  )
P_{\rm red}( \tilde{K}^{(m-1)} , \tilde{K}^{(l +1)} )\right|^2 \Big\}^{1/2}  \\
&&\leq     \sqrt{\frac{  p !  }{( l+ 1) ! }}
\Big\{ \int_{  X^{m-1}  \times X^{l+1} }    d \tilde{K}^{(m-1)}  d \hat{K}^{(l+1)}
  |\tilde{K}^{(m-1)}| \\
&& \times \left| (H_f)^{1/2} \varphi_p( \tilde{K}^{(m-1)}  ,  \hat{K}^{(l+1)}  )
P_{\rm red}( \tilde{K}^{(m-1)} , \tilde{K}^{(l +1)} )\right|^2 \Big\}^{1/2} .
\end{eqnarray*}
 Iterating
above estimate  we arrive at
$
D_{l,p}(\varphi_p) \leq  \| H_f^{m/2} 1_{H_f \leq 1} \varphi_p \| $.
Inserting this into  \eqref{eq:defofL2svv} gives
$$
| \langle   \varphi_p , H_{m,n}[\underline{w}]  \psi_q \rangle
| \leq \| w_{m,n} \|_{\WW_{m,n}} \| \varphi_p \| \| \psi_q \| .
$$
This  together with \eqref{eq:zero} yields that  for any two vectors  $\varphi = (\varphi_n)_{n \in \N_0}$ and
$\psi =   (\psi_n)_{n \in \N_0}$  in  $\FF$  (i.e., $\varphi_n , \psi_n \in \FF_n$) the following
inequality holds,
$$
| \langle \varphi , H_{m,n}[\underline{w}] \psi \rangle | \leq \| w_{m,n} \|_{\WW_{m,n}} \sum_{l=0}^\infty \| \varphi_{l+m} \| \| \psi_{l+n} \| \leq  \| w_{m,n} \|_{\WW_{m,n}} \| \varphi \| \| \psi \| ,
$$
and hence \eqref{eq:boundonH} follows.

\end{document}